\pgfplotsset{compat=1.9}
\title{\LARGE \bf
Data-driven input reconstruction and experimental validation
}
\author{Jicheng Shi, Yingzhao Lian, and Colin N. Jones 
\thanks{This work has received support from the Swiss National Science Foundation under the RISK project (Risk Aware Data-Driven Demand Response), grant number 200021 175627.}
\thanks{JS, YL and CNJ are with Automatic Laboratory, Ecole Polytechnique Federale de Lausanne, Switzerland.
        {\tt\small $\{$jicheng.shi, yingzhao.lian, colin.jones$\}$@epfl.ch}}
}
\newtheorem{theorem}{Theorem}
\newtheorem{lemma}{Lemma}
\newtheorem{definition}{Definition}
\newtheorem{remark}{Remark}
\newtheorem{assumption}{Assumption}
\DeclareMathOperator{\Han}{\mathfrak{H}}
\DeclareMathOperator{\R}{\mathbb{R}}
\begin{document}

\maketitle
\thispagestyle{empty}
\pagestyle{empty}

\begin{abstract}
This paper addresses a data-driven input reconstruction problem based on Willems’ Fundamental Lemma in which unknown input estimators (UIEs) are constructed directly from historical I/O data. Given only output measurements, the inputs are estimated by the UIE, which is shown to asymptotically converge to the true input without knowing the initial conditions. Both open-loop and closed-loop UIEs are developed based on Lyapunov conditions and the Luenberger-observer-type feedback, whose convergence properties are studied. An experimental study is presented demonstrating the efficacy of the closed-loop UIE for estimating the occupancy of a building on the EPFL campus via measured carbon dioxide levels.



\end{abstract}


\section{INTRODUCTION}
Input reconstruction estimates unknown inputs based on the measured states/outputs, which finds broad applications in system supervision, sensor fault detection, and robust control~\cite{hou1998input,ansari2018input,rajamani2017observers}. This problem is of particular interest when the real-time/online measurement of inputs is not affordable or privacy-sensitive. For example, as a critical factor to indoor temperature behaviors, the number of occupants can not be measured directly by cameras or Wi-Fi due to the costs and privacy. Instead, an indirect estimation is commonly deployed based on the response of indoor CO2 level~\cite{chen2018building}. The cutting force of the machine tools is another important example, whose measurement is only feasible with a dedicated laboratory setup~\cite{zhu2012state}. 

The input reconstruction problem has been studied in a model-based setup, and various methods have been proposed based on unknown input observer (UIO)~\cite{valcher1999state,sundaram2007delayed}, optimal filters~\cite{gillijns2007unbiased}, generalized inverse approach~\cite{ansari2019deadbeat} and sliding mode observers~\cite{zhu2012state}. UIO is of special interest in our study, and most methods fall into two categories in the model-based setup. In one way, system states are measured or estimated, and are further used to reconstruct the unknown input by matrix inversion~\cite{valcher1999state}, matrix pencil decomposition~\cite{hou1998input}. In the other category of methods, states and unknown inputs are estimated concurrently, whose estimate can achieve finite step convergence~\cite{ansari2019deadbeat}.

In practice, the model of the targeted system is usually not available. Instead of running a system identification, the Willems' fundamental lemma offers a direct characterization of the system responses with an informative historical dataset~\cite{willems2005note}. This characterization provides a convenient interface to data-driven methods, and has been deployed in output prediction~\cite{markovsky2008data} and in controller design~\cite{coulson2019data, de2019formulas, markovsky2007linear, berberich2020data, lian2021adaptive,lian2021nonlinear}.

This work applies the Willems' fundamental lemma to enable the direct input reconstruction with historical data. A similar setup was studied in~\cite{turan2021data}, where the system states are assumed to be measured. This work gets rid of the state measurement and achieves unknown input reconstruction directly from output measurements. In order to stress this difference, the approach developed in this work is termed the unknown input estimator (UIE), instead of the unknown input observer (UIO). This work proposes two design schemes of data-driven UIE, whose stability is studied.

In the following, Section~\ref{sect:2} reviews the output prediction based on the Willems' Fundamental Lemma, alongside the statement of the UIE problem. The design of a stable data-driven open-loop UIE is proposed in Section~\ref{sect:3}, followed by its closed-loop counterpart in Section~\ref{sect:4}.  The proposed UIEs are validated in Section~\ref{sect:5} by simulations and an occupancy estimation experiment in a real-world building, followed by a conclusion in Section~\ref{sect:6}.

\noindent\textbf{Notations}: $\mathbf{I}_n\in\mathbb{R}^{n\times n}$ denotes an identity matrix.
Regarding a matrix $M$, its numbers of columns and rows are respectively denoted by $n_M$ and $m_M$ such that $M\in \R^{n_M \times m_M}$. Accordingly, $\text{Null(M)}$ denotes its null space. $M^g:=\{X|MXM=M\}$ is the set of generalized inverse of matrix $M$. A strictly positive definite matrix $M$ is denoted by $M\succ 0$. The dimension of a vector $s$ denoted by $n_s$. Given an ordered sequence of vector $\{s_t,s_{t+1},\dots,s_{t+L}\}$, its vectorization is denoted by $s_{t:t+L} = [s_t^\top,\dots,s_{t+L}^\top]^\top\;$.


\section{PRELIMINARIES}\label{sect:2}
A discrete-time linear time-invariant (LTI) system, dubbed $\mathcal{B}(A,B,C,D)$, is defined by
\begin{equation}\label{eqn:lin_dyn_deter}
    \begin{split}
    x_{t+1} = Ax_t + Bu_t\;,y_t = Cx_t+Du_t\;,
    \end{split}
\end{equation}
 whose states,  inputs and outputs are denoted by $x \in \R^{n_x}$, $u \in  \R^{n_u}$ and $y \in  \R^{n_y}$ respectively. The order of the system is defined as $\mathbf{n}(\mathcal{B}(A,B,C,D)):=n_x$. The lag of the system $\mathbf{l}(\mathcal{B}(A,B,C,D))$ is defined as the smallest integer $\ell$ with which its observability matrix $\mathcal{O}_{\ell}:= \left[C^\top,\left(CA\right)^\top,\dots,\left(CA^{\ell-1}\right)^\top\right]^\top$
 has rank $n_x$. An $L$-step I/O trajectory generated by this system concatenates I/O sequence by $[u_{1:L};y_{1:L}]$, and the set of all possible $L$-step trajectories generated by $\mathfrak{B}(A,B,C,D)$ is denoted by $\mathfrak{B}_L(A,B,C,D)$.

\begin{definition}
A Hankel matrix of depth $L$ constructed by a vector-valued signal sequence $s:=\{s_i\}_{i=1}^T,\;s_i\in\R^{n_s}$ is
\begin{align*}
    \Han_L(s):=
    \begin{bmatrix}
    s_1 & s_2&\dots&s_{T-L+1}\\
    s_2 & s_3&\dots&s_{T-L+2}\\
    \vdots &\vdots&&\vdots\\
    s_{L} & s_{L+1}&\dots&s_T
    \end{bmatrix}\;.
\end{align*}
\end{definition}
\vspace{1em}

Given a sequence of input-output measurements $\{u_{\textbf{d},i},y_{\textbf{d},i}\}_{i=1}^T$, the input sequence is called \textit{persistently exciting} of order $L$ if $\Han_L(u_\textbf{d})$ is full row rank. By building the following $n_c$-column stacked Hankel matrix 
\begin{align*}
    \Han_L(u_\textbf{d}, y_\textbf{d}):=\begin{bmatrix}
    \Han_{L}(u_{\textbf{d}})\\\Han_L(y_{\textbf{d}})
    \end{bmatrix}\;,
\end{align*}
we state \textbf{Willems' Fundamental Lemma} as
\begin{lemma}\label{lem:funda}\cite[Theorem 1]{willems2005note}
Consider a controllable linear system $\mathfrak{B}(A,B,C,D)$ and assume $\{u_\textbf{d}\}_{i=1}^T$ is persistently exciting of order $L+ \mathbf{n}(\mathfrak{B}(A,B,C,D))$. The condition $\text{colspan}(\Han_L(u_\textbf{d},y_\textbf{d}))=\mathfrak{B}_L(A,B,C,D)$ holds.
\end{lemma}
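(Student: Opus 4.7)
The plan is to prove the set equality by two inclusions, with $\text{colspan}(\Han_L(u_\textbf{d},y_\textbf{d}))\subseteq\mathfrak{B}_L(A,B,C,D)$ being the easy one. Each column of $\Han_L(u_\textbf{d},y_\textbf{d})$ is by construction an $L$-step I/O trajectory, and since the dynamics are linear and time-invariant, any linear combination of trajectories is again a trajectory. This direction does not use the persistent excitation hypothesis.

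For the reverse inclusion, I would first observe that every element of $\mathfrak{B}_L(A,B,C,D)$ is uniquely parameterized by an initial-state/input pair $(x_0,u_{1:L})$, since $y_{1:L}$ is then determined by iterating \eqref{eqn:lin_dyn_deter}. Letting $X_\textbf{d}$ denote the matrix whose $i$-th column is the initial state $x_{\textbf{d},i}$ associated to the $i$-th column of $\Han_L(u_\textbf{d},y_\textbf{d})$, the claim reduces to showing that the stacked matrix $[X_\textbf{d}^\top,\,\Han_L(u_\textbf{d})^\top]^\top$ has full row rank $n_x + n_u L$. Once this is established, any pair $(x_0,u_{1:L})$ lies in its column image, i.e.\ $(x_0,u_{1:L}) = [X_\textbf{d}^\top,\,\Han_L(u_\textbf{d})^\top]^\top g$ for some $g$, and the combination $\Han_L(y_\textbf{d}) g$ then automatically matches the output sequence prescribed by $(x_0,u_{1:L})$ through linearity of $\mathfrak{B}(A,B,C,D)$.

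The substance of the argument is the rank condition on $[X_\textbf{d}^\top,\,\Han_L(u_\textbf{d})^\top]^\top$. I would proceed by contradiction: suppose $(\xi,\eta)\neq 0$ satisfies $\xi^\top X_\textbf{d}+\eta^\top \Han_L(u_\textbf{d})=0$. Using the recursion $x_{\textbf{d},i+1}=Ax_{\textbf{d},i}+Bu_{\textbf{d},i}$, every column of $X_\textbf{d}$ can be expressed in terms of a fixed $x_{\textbf{d},1}$ and a shifted block of past inputs drawn from the enlarged Hankel matrix $\Han_{L+\mathbf{n}(\mathfrak{B})}(u_\textbf{d})$. The persistent excitation hypothesis of order $L+\mathbf{n}(\mathfrak{B})$ makes this enlarged matrix full row rank, while controllability guarantees that the reachability matrix $[B,\,AB,\,\ldots,\,A^{n_x-1}B]$ has rank $n_x$; combining the two forces $\eta=0$ and then $\xi=0$, the desired contradiction.

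The main obstacle I anticipate is cleanly carrying out the reduction from the constraint on $[X_\textbf{d}^\top,\,\Han_L(u_\textbf{d})^\top]^\top$ to one on the enlarged Hankel matrix $\Han_{L+\mathbf{n}(\mathfrak{B})}(u_\textbf{d})$, since this step intertwines the state recursion with the column structure of the Hankel operator. Once this linear-algebraic reduction is in hand, persistence of excitation eliminates the input coefficient $\eta$ and controllability eliminates the state coefficient $\xi$, closing the argument along the lines of Willems' original proof.
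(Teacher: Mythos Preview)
The paper does not supply its own proof of this lemma; it is stated with the citation \cite[Theorem~1]{willems2005note} and then used as a black box throughout. There is therefore nothing in the present paper to compare your argument against.

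That said, your outline is essentially the classical proof from the cited reference: the easy inclusion follows from linearity of the behavior, and the reverse inclusion is reduced to showing that $\begin{bmatrix}X_\textbf{d}\\ \Han_L(u_\textbf{d})\end{bmatrix}$ has full row rank $n_x+Ln_u$, after which any $(x_0,u_{1:L})$ lies in its column span and the corresponding output is recovered by linearity. The contradiction argument you sketch---assuming a nonzero left annihilator $(\xi,\eta)$, propagating the state recursion to embed the constraint in $\Han_{L+\mathbf{n}(\mathfrak{B})}(u_\textbf{d})$, and then invoking persistence of excitation together with full rank of the controllability matrix---is exactly the mechanism used in the original Willems et~al.\ proof (and in later expositions such as van~Waarde et~al.). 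The step you identify as the main obstacle is indeed where the work lies, but it is a purely linear-algebraic bookkeeping exercise once the recursion $x_{\textbf{d},i+1}=Ax_{\textbf{d},i}+Bu_{\textbf{d},i}$ is iterated $n_x$ times; no additional idea is needed beyond what you have already written.
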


In the rest of the paper, the subscript $_\textbf{d}$ marks a data point from the training dataset collected offline, and $L$ is reserved for the length of the system response. 

The characterization of system response by Lemma~\ref{lem:funda} is used to develop data-driven output prediction~\cite{markovsky2008data,coulson2019data}. In~\cite{markovsky2008data}, the $N_{pred}$-step output prediction $\bar{y}_{t+1:t+N_{pred}}$ driven by an $N_{pred}$-step predicted output $u_{t+1:t+N_{pred}}$ is given by the solution to the following equations at time $t$:
\begin{subequations} \label{eqn:output_pred}
\begin{align}
    \begin{bmatrix}
        \Han_{L,init}(u_\textbf{d}) \\ \Han_{L,init}(y_\textbf{d}) \\\Han_{L,pred}(u_\textbf{d})
    \end{bmatrix} g &= 
    \begin{bmatrix}
        u_{t-N_{init}+1:t}\\
        y_{t-N_{init}+1:t}\\
        u_{t+1:t+N_{pred}}
    \end{bmatrix} \label{eqn:output_pred_1}\\
    \Han_{L,pred}(y_\textbf{d}) &=: \bar{y}_{t+1:t+N_{pred}} \label{eqn:output_pred_2}\;.
\end{align}
\end{subequations}
Two sub-Hankel matrices of output are defined by
\begin{align}
    \Han_L(y_\textbf{d}) = \begin{bmatrix}
        \Han_{L,init}(y_\textbf{d})\\\Han_{L,pred}(y_\textbf{d})
    \end{bmatrix}\;, \label{eqn:Han_sep}
\end{align}
and each of them of depth $N_{init}$ and $N_{pred}$ respectively, such that $N_{init} + N_{pred} = L$. Similarly, the Hankel matrices $\Han_{L,init}(u_\textbf{d})$ and $\Han_{L,pred}(u_\textbf{d})$ are constructed. Last but not least, the solution to~\eqref{eqn:output_pred} is well-defined if $N_{init} \geq \mathbf{l}(\mathcal{B}(A,B,C,D))$. Specifically, this condition implies that $[u_{t-N_{init}+1:t}, y_{t-N_{init}+1:t}]$ the $N_{init}$-step input output sequences preceding the current point of time can uniquely determine the underlying state $x_t$. Readers are referred to~\cite{markovsky2008data} for more details.

\subsection{Problem Statement and Inspiration}

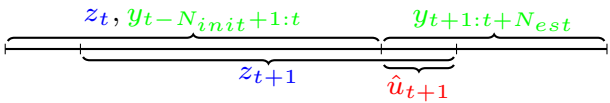
\begin{figure}[h]
  \centering

  \begin{tikzpicture}
    \draw[thick](0,0)--(8,0); 
    \draw[-] foreach~in{0,1,5,6,8}{(~,-2pt)--(~,2pt)};
  
    \draw[thick,decorate,decoration=brace](0 cm + 1pt,3pt) -- node[above,font=\scriptsize]{\scalebox{1.5}{$\color{blue} z_{t} \color{black}, \color{green} y_{t-N_{init}+1:t}$}} (5 cm - 1pt,3pt); 
    
    \draw[thick,decorate,decoration=brace](5 cm + 1pt,3pt) -- node[above,font=\scriptsize]{\scalebox{1.5}{$\color{green} y_{t+1:t+N_{est}}$}} (8 cm - 1pt,3pt);

    \draw[thick,decorate,decoration={brace, mirror}](5 cm + 1pt,-6pt) -- node[below,font=\scriptsize]{\scalebox{1.5}{$\color{red} \hat{u}_{t+1}$}} (6 cm - 1pt,-6pt);
    
    \draw[thick,decorate,decoration={brace, mirror}](1 cm + 1pt,-3pt) -- node[below,font=\scriptsize]{\scalebox{1.5}{$\color{blue} z_{t+1}$}} (6 cm - 1pt,-3pt);    
  \end{tikzpicture} 
  \caption{Diagram of input estimation at time $t+1$. Input estimate $\hat{u}_{t+1}$ can be estimated after $y_{t+N_{est}}$ is measured at time $t+N_{est}$. Differently,  output prediction $\bar{y}_{t+1}$ is computed immediately by~\eqref{eqn:output_pred}, if given $u_{t+1}$. }
  \label{fig:dig_UIE}
\end{figure}

Recall the LTI dynamics~\eqref{eqn:lin_dyn_deter}, we assume that we have an offline I/O dataset $\{u_\mathbf{d},y_\mathbf{d}\}$. During the online operation, the inputs are not measurable and thus \textbf{unknown}, and this work studies the recursive estimate of the unknown inputs from the measured outputs. In particular, the recursive estimate is generated by an  unknown input estimator (UIE), whose structure is given by
\begin{equation}\label{eqn:UIE}
    \begin{split}
    z_{t+1} &= A_{UIE}z_t+B_{UIE}d_t\;, \\
    \hat{u}_t &= [\mathbf{0}\;\mathbf{I}_{n_u}]z_t\;,
    \end{split}
\end{equation}
where $z_t := [\hat{u}_{t-N_{init}+1}^\top,\dots,\hat{u}_{t}^\top]^\top$ is vectorized $N_{init}$-step unknown input estimate, and $d_t:=y_{t-N_{init}+1:t+N_{est}}$ is the sequence of output measurements, as depicted in Figure~\ref{fig:dig_UIE}. We leave the discussion about $N_{est}$ in Section~\ref{sect:3}. Furthermore, a UIE is \textbf{stable} if $\lim\limits_{t\rightarrow \infty}\hat{u}_t - u_t \rightarrow 0$ for any initial guess $z_0$. Note that $z_t$ is the sequence of $N_{init}$-step unknown input estimate, it is reasonable to design an observable canonical form based UIE, such that the recursive estimator only update the lastest unknown input in $z_t$ (i.e. $\hat{u}_t$), and we term a UIE of this form an open-loop UIE (op-UIE). Otherwise, it is called a closed-loop UIE (cl-UIE).

The goal of this work is to design the UIE components (i.e. $A_{UIE}$ and $B_{UIE}$) directly from data $\{u_\mathbf{d},y_\mathbf{d}\}$. Inspired by the data-driven output prediction~\eqref{eqn:output_pred}, it is reasonable to formulate a similar data-driven input estimation scheme:
\begin{subequations}\label{eqn:input_est}
\begin{align}
    \begin{bmatrix}
        \Han_{L,init}(u_\textbf{d}) \\ \Han_{L,init}(y_\textbf{d}) \\\Han_{L,est}(y_\textbf{d})
    \end{bmatrix} g &= 
    \begin{bmatrix}
        u_{t-N_{init}+1:t}\\
        y_{t-N_{init}+1:t}\\
        y_{t+1:t+N_{est}}
    \end{bmatrix} \label{eqn:input_est_1}\\
    \Han_{L,est(1)}(u_\textbf{d})g &=:\bar{u}_{t+1} \label{eqn:input_est_2}\;,
\end{align}
\end{subequations}
where sub-Hankel matrices $\Han_{L,init}(u_\textbf{d})$, $\Han_{L,init}(y_\textbf{d})$, $ \Han_{L,est}(y_\textbf{d})$, $\Han_{L,est}(u_\textbf{d})$ follow a similar splitting definition in~\eqref{eqn:Han_sep} such that $N_{ini}+N_{est} = L$, and $\Han_{L,est(1)}(u_\textbf{d})$ denotes the first $n_u$ rows of $\Han_{L,est}(u_\textbf{d})$. However, this scheme~\eqref{eqn:input_est} is numerically not implementable, as input measurements in~\eqref{eqn:input_est_1} are not available. The key idea of this work is to fit this scheme~\eqref{eqn:input_est} into the general UIE structure~\eqref{eqn:UIE}.

\begin{remark} \label{rem:uyhat}
In the rest of the paper, $\bar{u}_{t+1}$ indicates the input estimate by~\eqref{eqn:input_est} given the actual previous input sequence $u_{t-N_{init}+1:t}$. $\hat{u}_{t+1}$ denotes the input estimate by~\eqref{eqn:UIE} and $z_{t+1}$. 
\end{remark}

\section{Data-driven op-UIE} \label{sect:3}
The key idea of this section is to substitute the $u_{t-N_{init}+1:t}$ in~\eqref{eqn:input_est_1} by its recursive input estimate $\hat{u}_{t-N_{init}+1:t}=:z_{t}$ in the UIE~\eqref{eqn:UIE}. For the sake of clarity, the notations in~\eqref{eqn:input_est} are simplified by
\begingroup\makeatletter\def\f@size{8.5}\check@mathfonts
\begin{align*}
    &H := \begin{bmatrix}
        \Han_{L,init}(u_\textbf{d}) \\ \Han_{L,init}(y_\textbf{d}) \\\Han_{L,est}(u_\textbf{d})
    \end{bmatrix}, 
    b := \begin{bmatrix}
        u_{t-N_{init}+1:t}\\
        y_{t-N_{init}+1:t}\\
        y_{t+1:t+N_{est}}
    \end{bmatrix},
    & H_u := \Han_{L,est(1)}(u_\textbf{d}) 
\end{align*}
\endgroup
\begin{assumption}\label{assum:PE}
The historical input signals  $\{u_\textbf{d}\}_{i=1}^T$ are persistently exciting of order $N_{init} + N_{est} + \mathbf{n}(\mathfrak{B}(A,B,C,D))$.
\end{assumption}

Under this assumption, the Hankel matrices $H$ constructed by $\{u_\mathbf{d},y_\mathbf{d}\}$ is informative enough, such that Lemma~\ref{lem:funda} guarantees that $b\in\text{colspan}(H)$. Therefore, the solution set to $Hg=b$ is non-empty, which can be characterize by
\begin{equation} \label{eqn:g_solution}
    \mathcal{T}(b):=\{g| g=Gb + \nu,\;G\in H^g, \; \nu\in\text{Null}(H)\}\;.
\end{equation}
Accordingly, $\hat{u}_{t+1}$ in~\eqref{eqn:input_est_2} is given by
\begin{align}\label{eqn:ut_op_UIE}
    \bar{u}_{t+1} = \{ H_u g|g\in\mathcal{T}(b)\}\;.
\end{align}
However,  the solution set~\eqref{eqn:g_solution} is not a singleton and therefore $\hat{u}_{t+1}$ is not necessarily unique. To ensure uniqueness, we give the following lemma
\begin{lemma}\label{lem:ut_uni}
    Let Assumption~\ref{assum:PE} holds, the set~\eqref{eqn:ut_op_UIE} is a singleton if and only if
    \begin{align} \label{eqn:null}
    \text{Null}(H) \subseteq \text{Null}(H_u)\;.
\end{align}
\end{lemma}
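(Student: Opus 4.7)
The plan is to unfold the definition of $\mathcal{T}(b)$, recognize it as the full affine solution set of $Hg = b$, and then reduce the singleton condition on the image $H_u \mathcal{T}(b)$ to a null-space containment. Assumption~\ref{assum:PE} is used only at the very start: via Lemma~\ref{lem:funda}, the stacked vector $b$ lies in $\text{colspan}(H)$, so the equation $Hg = b$ is consistent and a particular solution exists.

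First I would show $\mathcal{T}(b) = \{g : Hg = b\}$. The inclusion $\subseteq$ is immediate: for any $G \in H^g$, the identity $HGH = H$ together with $b = Hx$ for some $x$ gives $H(Gb) = HGHx = Hx = b$, and adding any $\nu \in \text{Null}(H)$ preserves this. For the reverse inclusion, fix any one generalized inverse $G_0 \in H^g$; then an arbitrary solution $g$ decomposes as $g = G_0 b + (g - G_0 b)$, with $H(g - G_0 b) = b - b = 0$, so $g - G_0 b \in \text{Null}(H)$ and $g \in \mathcal{T}(b)$.

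Next I would rewrite the image set. Picking any fixed particular solution $g_0 \in \mathcal{T}(b)$, the previous step gives
\begin{equation*}
\{H_u g \mid g \in \mathcal{T}(b)\} = H_u g_0 + H_u(\text{Null}(H)).
\end{equation*}
Since $H_u g_0$ is a fixed vector, the image is a singleton if and only if the linear subspace $H_u(\text{Null}(H))$ reduces to $\{0\}$, which is exactly $\text{Null}(H) \subseteq \text{Null}(H_u)$. Both directions of the ``iff'' then drop out: containment $\Rightarrow$ the subspace is zero $\Rightarrow$ singleton; and if the image is a singleton, then $H_u \nu = H_u(g_0 + \nu) - H_u g_0 = 0$ for every $\nu \in \text{Null}(H)$, yielding the containment.

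There is no real obstacle here beyond the bookkeeping in step one: it is tempting to take $\mathcal{T}(b) = \{g : Hg = b\}$ for granted, but the authors defined $\mathcal{T}(b)$ through generalized inverses and null-space shifts, so the equivalence must be justified using $HGH = H$ and the consistency guaranteed by Assumption~\ref{assum:PE}. Once that is done, the rest is a one-line linear-algebra argument about affine subspaces.
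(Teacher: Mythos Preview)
Your proposal is correct and follows essentially the same route as the paper: both arguments reduce the singleton question to whether $H_u$ annihilates $\text{Null}(H)$, by comparing $H_u g_1$ and $H_u g_2$ for two elements of $\mathcal{T}(b)$ differing by a null-space vector. The only difference is that you explicitly verify $\mathcal{T}(b)=\{g:Hg=b\}$ using $HGH=H$ and the consistency supplied by Assumption~\ref{assum:PE}, whereas the paper tacitly treats this identification as given; your extra bookkeeping is sound and arguably makes the proof cleaner.
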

\begin{proof}
$(\Rightarrow)$ For any solutions $g_1,\;g_2\in\mathcal{T}(b)$ to  $Hg=b$, we have $Hg_1 - Hg_2 = H(g_1 - g_2)=0$, which indicates $(g_1 - g_2)\in\text{Null}(H)$ Therefore, by $\text{Null}(H) \subseteq \text{Null}(H_u)$, $H_u g_1 - H_u g_2 = 0$. Due the arbitrariness of $g_1$ and $g_2$, $u_{t}$ defined in~\eqref{eqn:ut_op_UIE} is a singleton. $(\Leftarrow)$ For any $G\in H^g$, $\nu \in \text{Null}(H)$, $H_u (Gb+\nu) - H_uGb  = 0$ because $\hat{u}_{t}$ by~\eqref{eqn:ut_op_UIE} is a singleton. This indicates $H_u\nu=0,\forall \nu\in\text{Null}(H)$ and therefore $\text{Null}(H) \subseteq \text{Null}(H_u)$.
\end{proof}

If the condition~\eqref{eqn:null} holds, the effect of null space $\text{Null}(H)$ in $\mathcal{T}(b)$ can be neglected. Next, by substituting the $u_{t-N_{init}+1:t}$ in~\eqref{eqn:ut_op_UIE} by $z_{t}$, we have:
\begin{align} \label{eqn:ut_op_UIE_hat}
    \hat{u}_{t+1} = H_uG[z_t^\top d_t^\top]^\top
\end{align}
, and we state the set of data-driven op-UIE candidates by
\begingroup\makeatletter\def\f@size{8.5}\check@mathfonts
\begin{equation} \label{eqn:null_AB}
    \mathcal{U}_{op}:=\left\{\begin{aligned}
        A_{UIE}\\B_{UIE}
    \end{aligned}\middle|\begin{aligned}A_{UIE} &= \begin{bmatrix}
        \mathbf{0}\;\;\;\; \mathbf{I}\\
        H_u G_u 
    \end{bmatrix}\\ B_{UIE} &= \begin{bmatrix}
                    \mathbf{0} \\
                    H_u G_y
                \end{bmatrix}\end{aligned}\;,\forall\;[G_u, G_y]=G\in H^g\right\}
\end{equation}
\endgroup
where $G_u$ and $G_y$ partitions any generalized inverse $G$, and respectively consists of $N_{init}n_u$ and $(N_{init}+N_{est})n_y$ columns. Note that an element in set $\mathcal{U}_{op}$~\eqref{eqn:null_AB} is not necessarily stable, therefore, choosing a $G\in H^g$ such that the resulting data-driven op-UIE is stable is the key ingredient in a data-driven op-UIE design procedure, which will be discussed in the next subsection~\ref{sect:desing_op_UIE}.

\begin{remark}\label{rmk:Nest}
The choices of $N_{est}$ for output measurement depends on the properties of matrices $\{B,C,D\}$ in the dynamics~\eqref{eqn:lin_dyn_deter}, which intuitively reflects how soon all the entries of inputs can affect the output. For example, if $D$ is full column rank, the effect from the input to output is instantaneous and thus $N_{est}$ can set to 1. A model-based discussion about $N_{est}$ can be found in ~\cite{valcher1999state} and~\cite{jin1997time}. The condition~\eqref{eqn:null} in Lemma~\ref{lem:ut_uni} gives a data-driven criterion of $N_{est}$ selection, which intuitively states that the variation in input will always change the output, as any $g\notin\text{Null}(H_u)$ is not in $\text{Null}(H)$, and therefore reflected as variation in the output measurements.  
\end{remark}

\subsection{Design of data-driven op-UIE}\label{sect:desing_op_UIE}
To find the stable UIE within set $\mathcal{U}_{op}$, we first characterize its stability by the following lemma
\begin{lemma}\label{lem:op_UIE_stab}
    Let Assumption~\ref{assum:PE} and condition~\eqref{eqn:null} holds, a UIE in $\mathcal{U}_{op}$ is stable if and only if $A_{UIE}$ is Schur.
\end{lemma}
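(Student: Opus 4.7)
The plan is to reduce stability of the UIE to stability of a purely autonomous linear recursion in an error variable. Define the augmented error $e_t := z_t - u_{t-N_{init}+1:t}$, so that its last block is precisely $\hat{u}_t - u_t$. The UIE is stable, by definition, exactly when this last block tends to zero for every initial $z_0$; I will show this is equivalent to $e_t \to 0$ globally, and in turn to $A_{UIE}$ being Schur.

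First I would establish the key identity that the \emph{true} input stack $\bar{z}_t := u_{t-N_{init}+1:t}$ satisfies the \emph{same} recursion as the UIE, namely $\bar{z}_{t+1} = A_{UIE}\bar{z}_t + B_{UIE}d_t$. The first $N_{init}-1$ block rows of $A_{UIE}$ form a shift register and $B_{UIE}$ is zero there, so those rows just reproduce $u_{t-N_{init}+2:t}$, which matches the corresponding entries of $\bar{z}_{t+1}$. For the last block row, I would invoke the uniqueness statement of Lemma~\ref{lem:ut_uni}: under condition~\eqref{eqn:null}, $H_u g$ takes a single value for every $g \in \mathcal{T}(b)$, and Willems' Fundamental Lemma guarantees that the true trajectory itself yields some $g \in \mathcal{T}(b)$ whose $H_u g$ is the true $u_{t+1}$. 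Hence $H_u G_u \bar{z}_t + H_u G_y d_t = u_{t+1}$, the correct last block of $\bar{z}_{t+1}$. Subtracting the UIE recursion from this identity yields the autonomous error dynamics $e_{t+1} = A_{UIE} e_t$.

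From here the sufficiency direction is immediate: if $A_{UIE}$ is Schur then $e_t \to 0$ for every $e_0$, hence for every $z_0$, and in particular the last block $\hat{u}_t - u_t \to 0$. For necessity, suppose $A_{UIE}$ has an eigenvalue $\lambda$ with $|\lambda|\ge 1$ and a corresponding eigenvector $v$. Choosing $z_0 = \bar{z}_0 + v$ gives $e_t = \lambda^t v$, which does not vanish; the subtle part is checking that its last block is not accidentally zero. This follows from the shift-register structure: the eigenvalue equation $A_{UIE}v = \lambda v$ forces $v^{(k+1)} = \lambda v^{(k)}$ for $k=1,\dots,N_{init}-1$, so if the last block $v^{(N_{init})}$ were zero then, since $\lambda\ne 0$ (any unit-circle eigenvalue is nonzero), all blocks would vanish, contradicting $v\ne 0$. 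Therefore $\hat{u}_t - u_t = \lambda^t v^{(N_{init})}$ fails to converge, so the UIE is unstable.

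The main obstacle is the last-block identity $H_u G_u \bar z_t + H_u G_y d_t = u_{t+1}$; the rest of the argument is linear-systems bookkeeping. This step is exactly where Assumption~\ref{assum:PE} and the null-space inclusion~\eqref{eqn:null} are used, since without uniqueness the map $[ \bar{z}_t^\top, d_t^\top ]^\top \mapsto H_u G [\bar{z}_t^\top, d_t^\top]^\top$ would depend on the particular choice of generalized inverse $G$, and would not be guaranteed to produce the true input.
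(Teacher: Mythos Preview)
Your proof is correct and follows essentially the same route as the paper: both establish that the true input stack satisfies the UIE recursion (via Lemma~\ref{lem:ut_uni} and the fundamental lemma), deduce the autonomous error dynamics $e_{t+1}=A_{UIE}e_t$, and reduce stability to Schur stability of $A_{UIE}$. Your necessity argument is in fact more careful than the paper's, which simply asserts the ``if and only if'' without justifying why the projection $[\mathbf{0}\;\mathbf{I}_{n_u}]A_{UIE}^t e_0$ must diverge when $A_{UIE}$ is not Schur; your observation that the shift-register structure forces the last block of any eigenvector with nonzero eigenvalue to be nonzero is exactly the missing detail.
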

\begin{proof}
Under Assumption~\ref{assum:PE} and condition~\eqref{eqn:null} holds, Lemma~\ref{lem:funda} and~\ref{lem:ut_uni} guarantee that $\bar{u}_{t+1}=u_{t+1}$ in~\eqref{eqn:ut_op_UIE}.
Therefore $\forall\; A_{UIE},\;B_{UIE}\in\mathcal{U}_{op}$, ~\eqref{eqn:ut_op_UIE} is equivalent to
\begingroup\makeatletter\def\f@size{9}\check@mathfonts
\begin{equation*}
    \begin{split}
    u_{t-N_{init}+2:t+1} 
    &= A_{UIE}u_{t-N_{init}+1:t}+B_{UIE}d_t\;,\\
    u_t &= [\mathbf{0}\;\mathbf{I}_{n_u}]u_{t-N_{init}+1:t}\;.
    \end{split}
\end{equation*}
\endgroup
Thus, we have 
 \begin{align*}
     \lim\limits_{t\rightarrow \infty}\hat{u}_t - u_t &= \lim\limits_{t\rightarrow \infty}[\textbf{0}\;\mathbf{I}_{n_u}]A_{UIE}(z_{t-1}-u_{t-N_{init}:t-1})\\
     &= \lim\limits_{t\rightarrow \infty}[\textbf{0} \;\mathbf{I}_{n_u}]A_{UIE}^t(z_0-u_{-N_{init}+1:0}). 
 \end{align*}
 The above equation converges to $0$ if and only if $A_{UIE}$ is Schur stable, and we conclude the proof.
\end{proof}

The Schur stability criterion can be validated via a semidefinite programming~\cite[Chapter 3.3]{ben2001lectures} as
\begin{align}
    \substack{A_{UIE}\\\textsc{ Schur stable}}\Longleftrightarrow \begin{cases}
         \exists\;W\succ 0\\
         \begin{bmatrix}
             W & A_{UIE}W \\
             WA_{UIE}' & W
         \end{bmatrix} \succ 0
    \end{cases}
\end{align}
hence, the design of a data-driven op-UIE is reduced to a feasibility problem:
\begin{subequations}\label{eqn:Lya_bil}
\begin{align}
    \underset{W\succ 0,A_{UIE},B_{UIE}}{\mathrm{minimize}}\;0\notag
\end{align}
subject to
\begin{align}
& A_{UIE}\;,B_{UIE}\in \mathcal{U}_{op}\label{eqn:Lya_bil_a}\\
         &\begin{bmatrix}
             W & A_{UIE}W \\
             WA_{UIE}^\top & W
         \end{bmatrix} \succ 0\label{eqn:Lya_bil_b}\;.
\end{align}
\end{subequations}
This optimization problem is NP-hard due to the bilinear matrix (BMI) inequality~\eqref{eqn:Lya_bil_b}~\cite{toker1995np}. In the rest of this section, we will tighten this BMI into a tractable linear matrix inequality (LMI)~\cite{crusius1999sufficient} and characterize the set of generalized inverse $H^g$ in $U_{op}$ via singular value decomposition (SVD).

\subsubsection{Characterization of Generalized Inverse}
Denote the SVD of matrix $H$ by $H = U\begin{bmatrix}
    S & \mathbf{0} \\
    \mathbf{0} & \mathbf{0}
\end{bmatrix}V^\top$, with $S\in\R^{n_S\times n_S}$  containing all the positive singular values. Then the generalized inverse is characterized by 
\begin{align}\label{eqn:set_Hg}
    H^g = \left\{G\middle|\begin{aligned}
        &V \left(\begin{bmatrix}
    S^{-1} & \textbf{0}\\
    \textbf{0} & \textbf{0}
\end{bmatrix}+F\right)U^\top\\
&F\in \mathbb{R}^{m_H\times n_H},\;[I_{n_S}\;\textbf{0}] F\begin{bmatrix}
    I_{n_S}\\\textbf{0}
\end{bmatrix}=\textbf{0}
\end{aligned}\right\}\;,
\end{align}
where $F$ is a any matrix of shape $H$ whose upper-left block of size $n_S\times n_S$ is zeros. For the sake of clarity, we characterize an element in $H^g$ by $G(F)$ such that 
\begin{align*}
    G(F) := V \left(\begin{bmatrix}
    S^{-1} & \textbf{0}\\
    \textbf{0} & \textbf{0}
\end{bmatrix}+F\right)U^\top
\end{align*}
Additionally, the set~\eqref{eqn:set_Hg} is indeed the set of generalized inverse as
\begin{align*}
    HG(F)H &= U\begin{bmatrix}
    S & \mathbf{0} \\
    \mathbf{0} & \mathbf{0}
\end{bmatrix} \left(\begin{bmatrix}
    S^{-1} & \textbf{0}\\
    \textbf{0} & \textbf{0}
\end{bmatrix}+F\right)\begin{bmatrix}
    S & \mathbf{0} \\
    \mathbf{0} & \mathbf{0}
\end{bmatrix}V^\top\\
& =U\begin{bmatrix}
    S & \mathbf{0} \\
    \mathbf{0} & \mathbf{0}
\end{bmatrix}V^\top = H
\end{align*}

\subsubsection{LMI Tightening} 
Before going into details, we would first intuitively explain the idea behind the design procedure. Recall the idea behind Lemma~\ref{lem:ut_uni}, we can see that the design of the $A_{UIE}$ lie in the selection of the null space of the matrix $H$ such that the set~\eqref{eqn:ut_op_UIE} is still unique, and the matrix $A_{UIE}$ is Schur stable. Hence, we only need to focus on the null space of $H$, which motivates the following LMI reformulation. Based on the characterization of $H^g$ in~\eqref{eqn:set_Hg}, any $A_{UIE}$ in our feasible set $\mathcal{U}_{op}$ is accordingly parametrized by matrix $F$ such that 
\begingroup\makeatletter\def\f@size{9}\check@mathfonts
\begin{equation} \label{eqn:AUIE_F}
\begin{aligned}
A_{UIE}(F) &= \begin{bmatrix}
        \mathbf{0}\;\;\;\; \mathbf{I}_{(N_{init}-1)n_u} \\
        H_u V(
        \begin{bmatrix}S^{-1} & \mathbf{0} \\ \mathbf{0} & \mathbf{0}
        \end{bmatrix}
        +F)U^\top \begin{bmatrix} \mathbf{I}_{N_{init}n_u} \\ \mathbf{0}\end{bmatrix}
    \end{bmatrix} \\
    &\vspace{-3mm}= N_1 + N_2FN_3, \\\vspace{-3mm}
    N_2 :&= \begin{bmatrix} \mathbf{0} \\ \mathbf{I}_{n_u} \end{bmatrix}H_u V, \;\;  N_3 = U^\top \begin{bmatrix} \mathbf{I}_{N_{init}n_u} \\ \mathbf{0}\end{bmatrix} \\\vspace{-5mm}
    N1 :&=\begin{bmatrix}
        \mathbf{0}\;\;\;\; \mathbf{I}_{(N_{init}-1)n_u} \\
        \mathbf{0}
    \end{bmatrix} + N_2 \begin{bmatrix}S^{-1} & \mathbf{0} \\ \mathbf{0} & \mathbf{0}
        \end{bmatrix} N_3\;,
\end{aligned}
\end{equation}
\endgroup
To enable the LMI reformulation, we define 
\begin{equation}\label{eqn:T12}
\begin{aligned}
     T_1 = [\mathbf{0} \;\;\mathbf{I}_{n_H-n_s}]\in\R^{(n_H-n_s)\times n_H}\;,
\end{aligned}
\end{equation}
and we denote $r = \text{rank}(T_1 N_3)$. Regarding the definition of generalized inverse and~\eqref{eqn:AUIE_F}, the operation $T_1 N_3$ select the components in $U$ related to $\text{Null}(H)$. Followed by this, we define $T_2 = [\mathbf{I}_{r} \; \mathbf{0}]E\in\R^{r\times (n_H-n_s)}$, where $E$ is the multiplication of elementary operations to execute Gauss-Jordan Elimination for $T_1 N_3$. In summary, the operation $T_2 T_1N_3$ generates the subspace of $U$ related to the $\text{Null}(H)$, and based on the aforementioned discussion, the design of $A_{UIE}$ lies within this space, which leads to the following LMI tightening.
\begin{lemma}
    The BMI constraint~\eqref{eqn:Lya_bil_b} is satisfied if $\exists\;N\in\R^{m_H\times r}$,$M\in\R^{r\times r}$ and $W\in\R^{N_{init}nu \times N_{init}nu}\succeq 0$ such that $F= NM^{-1}T_2 T_1$ and
    \begin{subequations} \label{eqn:LMI}
    \begin{numcases}{}
         & $\begin{bmatrix}
             W & \begin{matrix} N_1W + \\N_2 N T_2 T_1 N_3 \end{matrix}\\
             \begin{matrix}WN_1' + \\ (N_2 N T_2 T_1 N_3)'\end{matrix}\ & W
         \end{bmatrix} \succ 0$ \label{eqn:LMI_a} \\
         & $T_2 T_1 N_3 W = M T_2 T_1 N_3$ \label{eqn:LMI_b} 
    \end{numcases}
    \end{subequations}
\end{lemma}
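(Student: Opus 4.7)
The plan is to prove the implication by direct substitution: plug the proposed parametrization $F = NM^{-1}T_2 T_1$ into the affine description $A_{UIE}(F) = N_1 + N_2 F N_3$ from~\eqref{eqn:AUIE_F}, and use the auxiliary equality~\eqref{eqn:LMI_b} to eliminate the nonlinear factor $M^{-1}W$ from the product $A_{UIE}(F)\,W$, so that the $(1,2)$-block of the Lyapunov matrix in~\eqref{eqn:Lya_bil_b} collapses to the linear expression inside~\eqref{eqn:LMI_a}.

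Before the substitution I would first verify admissibility of the parametrization. Since $T_1 = [\mathbf{0}\ \mathbf{I}_{n_H - n_s}]$ in~\eqref{eqn:T12}, the product $F = NM^{-1}T_2 T_1$ has its first $n_s$ columns identically zero, so in particular its top-left $n_s \times n_s$ block vanishes, which is precisely the structural constraint on $F$ characterizing $H^g$ in~\eqref{eqn:set_Hg}. Hence $G(F)\in H^g$ and the resulting pair $(A_{UIE}(F),B_{UIE}(F))$ belongs to $\mathcal{U}_{op}$, fulfilling~\eqref{eqn:Lya_bil_a}. Next, left-multiplying~\eqref{eqn:LMI_b} by $M^{-1}$ gives $M^{-1}T_2 T_1 N_3 W = T_2 T_1 N_3$, so
\begin{equation*}
A_{UIE}(F)\,W = N_1 W + N_2 N M^{-1} T_2 T_1 N_3 W = N_1 W + N_2 N T_2 T_1 N_3,
\end{equation*}
which is jointly linear in $(N,W)$. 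Substituting this expression into the block matrix of~\eqref{eqn:Lya_bil_b} reproduces exactly the matrix appearing inside~\eqref{eqn:LMI_a}, so the assumed positive definiteness of~\eqref{eqn:LMI_a} certifies~\eqref{eqn:Lya_bil_b}.

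The main subtlety will be confirming that $M$ is invertible so that $M^{-1}$ is well defined in the parametrization. Because $T_2 = [\mathbf{I}_r\ \mathbf{0}]E$ is obtained via Gauss-Jordan reduction of $T_1 N_3$, the composite $P := T_2 T_1 N_3 \in \R^{r \times n_H}$ has full row rank $r$. The equality $PW = MP$ in~\eqref{eqn:LMI_b} then uniquely determines $M = P W P^{\top}(P P^{\top})^{-1}$, and since $W \succ 0$ and $PP^{\top}\succ 0$ (the latter by full row rank of $P$), the congruence identity $\det(M) = \det(PWP^{\top})/\det(PP^{\top}) > 0$ shows that $M$ is nonsingular. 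With this in hand, the computation above is justified and the tightening LMI~\eqref{eqn:LMI} indeed implies the BMI~\eqref{eqn:Lya_bil_b}.
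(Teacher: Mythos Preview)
Your argument is correct and follows essentially the same route as the paper: verify that $F=NM^{-1}T_2T_1$ has vanishing first $n_S$ columns so that $G(F)\in H^g$, use full row rank of $T_2T_1N_3$ together with $W\succ 0$ to conclude $M$ is invertible, and then substitute $M^{-1}T_2T_1N_3W=T_2T_1N_3$ into $A_{UIE}(F)W$ to reduce~\eqref{eqn:Lya_bil_b} to~\eqref{eqn:LMI_a}. The only slip is a harmless dimension typo ($P=T_2T_1N_3$ lies in $\R^{r\times N_{init}n_u}$, not $\R^{r\times n_H}$); your invertibility argument via $M=PWP^\top(PP^\top)^{-1}$ is a slightly more explicit version of the paper's rank-based reasoning.
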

\begin{proof}
    The first $n_S$ column of $F$ are zeros, because $F= NM^{-1}T_2 T_1$ and $T_1 = [\mathbf{0} \;\;\mathbf{I}_{n_H-n_S}]$. Hence, $F$ satisfies~\eqref{eqn:set_Hg} and gives a generalized inverse $G(F)$.
    
    The idea of the rest proof comes from \cite[Theorem 1]{crusius1999sufficient}. By definition of $T_2$, matrix $T_2 T_1 N_3$ is full row rank. The left-hand-sied of condition~\eqref{eqn:LMI_b} is therefore full rank as $W\succ 0$, which further ensures that $M$ is also full rank. Therefore, $M^{-1}$ exists and $T_2 T_1 N_3 = M^{-1}T_2 T_1 N_3W$. Then we get~\eqref{eqn:Lya_bil_a} from~\eqref{eqn:LMI_a} by
    \begin{align*}
        N_1W + N_2 N T_2 T_1 N_3 & =  N_1W + N_2 N M^{-1}T_2 T_1 N_3W  \\
        & \stackrel{(a)}{=}   N_1W + N_2FN_3W = A_{UIE}(F)W\;,
    \end{align*}
    where $(a)$ follows $F= NM^{-1}T_2 T_1$, and we summarize the proof.
\end{proof}

Finally, we summarize the design of data-driven op-UIE into the following LMI feasibility problem:
\begin{subequations}
\begin{align*}
    \underset{W\succ 0,M,N}{\mathrm{minimize}}\; 0
\end{align*}
subject to
\begin{align*}
        & \begin{bmatrix}
             W & A_{UIE}W \\
             WA_{UIE}' & W
         \end{bmatrix} \succ 0,\;T_2 T_1 N_3 W = M T_2 T_1 N_3\\
     & \begin{bmatrix}
             W & (WN_1 + N_2 N T_2 T_1 N_3)\\
             (WN_1 + N_2 N T_2 T_1 N_3)^\top & W
         \end{bmatrix} \succ 0 
\end{align*}
\end{subequations}
The $A_{UIE}\;,B_{UIE}\in\mathcal{U}_{op}$ is reconstructed by setting $G\in H^g$ to $G(F)$ with $F= NM^{-1}T_2 T_1$.

\section{Data-Driven cl-UIE} \label{sect:4}
Recall that an op-UIE only updates the most recent unknown input in $z_{t+1}$, i.e. $\hat{u}_{t+1}$. Similar to the concept used in Luenberger observer~\cite{luenberger1971introduction}, the key idea behind a data-driven cl-UIE is to enable the correction update of the $\hat{u}_{t-N_{init}+2:t+1}$ estimate, i.e. $z_{t+1}$, by the error between the actual measurement of $y_{t+1}$ and its data-driven predictive estimate $\hat{y}_{t+1}$.

Recall the data-driven prediction problem in Section~\ref{sect:2}, we define following matrices for the sake of clarity, 
\begin{subequations} 
\begin{align*}
    \hat{H} :=& \begin{bmatrix}
        \Han_{L,init}(u_\textbf{d})\\ \Han_{L,est(1)}(u_\textbf{d}) \\ \Han_{L,init}(y_\textbf{d})
    \end{bmatrix},\;H_y := \Han_{L,est(1)}(y_\textbf{d})\\
     \hat{b} :=&
    \begin{bmatrix}
        \hat{u}_{t-N_{init}+1:t}\\
        \hat{u}_{t+1}\\
        y_{t-N_{init}+1:t}
    \end{bmatrix}\stackrel{(a)}{=}\begin{bmatrix}
        z_t\\
        H_u(G_uz_t+G_yd_t)\\
        [I_{n_y\times N_{init}}\;\textbf{0}]d_t
    \end{bmatrix} \\
    =& \underbrace{\begin{bmatrix}
        I &\textbf{0}\\
        H_uG_u & H_uG_y\\
        \textbf{0} &[I_{n_y\times N_{init}}\;\textbf{0}]
    \end{bmatrix}}_{P(G)}\begin{bmatrix}
        z_{t}\\d_{t}
    \end{bmatrix},\;\forall\;[G_u\;G_y]=G\in H^g
\end{align*}
\end{subequations}
where $(a)$ follows~\eqref{eqn:ut_op_UIE_hat} and $P(G)$ is introduced for the sake of compactness. 
Then, similar to~\eqref{eqn:ut_op_UIE_hat}, the corresponding output prediction $\hat{y}_{t+1}$ is defined by
\begin{align}\label{eqn:y_pred_onestep}
\begin{split}
    \forall\;\hat{G}\in\hat{H}^g,\;\hat{y}_{t+1} :=&  H_y\hat{G}\hat{b}\\
             =& H_y\hat{G}P(G)\begin{bmatrix}
        z_t\\d_t
    \end{bmatrix}
\end{split}
\end{align}
 Under the Assumption~\ref{assum:PE} and condition~\eqref{eqn:null}, the Fundamental Lemma~\ref{lem:funda} and Lemma~\ref{lem:ut_uni} guarantees this equality holds for the actual output $y_{t+1}$ with respect to the actual but unknown previous inputs sequence $u_{t-N_{init}+1:t}$
 \begin{align}\label{eqn:y_onestep}
\begin{split}
    \forall\;\hat{G}\in\hat{H}^g,\;y_{t+1} 
             = H_y\hat{G}P(G)\begin{bmatrix}
        u_{t-N_{init}+1:t}\\d_t
    \end{bmatrix}
\end{split}
\end{align}
Following a Luenberger observer style design, the observer will have the following structure with $\hat{A}_{UIE},\;\hat{B}_{UIE}\in\mathcal{U}_{op}$: 
\begin{align*}
    z_{t+1} = \hat{A}_{UIE}z_t +\hat{B}_{UIE}d_t +L(y_{t+1}-\hat{y}_{t+1})\;,
\end{align*}
where $\hat{y}_{t+1}$ is given in~\eqref{eqn:y_pred_onestep} and $y_{t+1}$ is always an entry of $d_t$ as $N_{est}\geq 1$ with 
\begin{align*}
    y_{t+1} = \overbrace{[\underbrace{\textbf{0}}_{(a)} \;\mathbf{I}_{n_y} \;\underbrace{\textbf{0}}_{(b)}]}^{T_y}d_t\;,
\end{align*}
where term $(a)$ is of $N_{init}n_y$ columns and term $(b)$ is of $(N_{est}-1)n_y$ columns, and this linear mapping is denoted by $T_y$. Hence, $\forall\;\hat{A}_{UIE},\;\hat{B}_{UIE}\in\mathcal{U}_{op},\;\hat{G}\in\hat{H}^g,\;G\in H^g$, the components of an data-driven cl-UIE can be written as:
\begin{subequations}\label{eqn:cl_UIE}
\begin{align}
    A_{UIE} &= \hat{A}_{UIE} - LH_y\hat{G}P(G)\begin{bmatrix}
        \mathbf{I}_{n_y\times N_{init}}\\\textbf{0}
    \end{bmatrix}\label{eqn:cl_UIE_A}\\
    B_{UIE} &= \hat{B}_{UIE}+ T_y - LH_y\hat{G}P(G)\begin{bmatrix}
        \textbf{0}\\\mathbf{I}_{n_y\times (N_{init}+N_{est})}
    \end{bmatrix}\label{eqn:cl_UIE_B}
\end{align}
\end{subequations}
The following Theorem summarizes the stability of a data-driven cl-UIE.

\begin{theorem} [Stability of cl-UIE ] \label{thm:EFUIE}
Let Assumption~\ref{assum:PE} and condition~\eqref{lem:ut_uni} holds, for any $\hat{A}_{UIE}\;,\hat{B}_{UIE}\in\mathcal{U}_{op},\;\hat{G}\in\hat{H}^g,\;G\in H^g$, the data-driven cl-UIE in~\eqref{eqn:cl_UIE} is stable if  $\hat{A}_{UIE}-LH_y\hat{G}P(G)\begin{bmatrix}
        \mathbf{I}_{n_y\times N_{init}}\\\textbf{0}
    \end{bmatrix}$ is Schur stable.
\end{theorem}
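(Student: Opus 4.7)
The plan is to reduce the stability claim to an error recursion of the form $e_{t+1} = A_{UIE} e_t$, where $e_t := z_t - u_{t-N_{init}+1:t}$ is the gap between the cl-UIE state and the (vectorized) true input window, and $A_{UIE}$ is exactly the matrix in \eqref{eqn:cl_UIE_A}. Once this linear error dynamics is established, $e_t = A_{UIE}^t e_0$, which vanishes as $t\to\infty$ for every initial guess $z_0$ iff $A_{UIE}$ is Schur, delivering the theorem.

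My first step would be to write out the Luenberger-style update explicitly,
\begin{equation*}
z_{t+1} = \hat{A}_{UIE} z_t + \hat{B}_{UIE} d_t + L\bigl(y_{t+1} - \hat{y}_{t+1}\bigr),
\end{equation*}
substitute $y_{t+1}=T_y d_t$ and the expression $\hat{y}_{t+1}=H_y\hat{G}P(G)[z_t^\top, d_t^\top]^\top$ from \eqref{eqn:y_pred_onestep}, and split the resulting linear term into a part multiplying $z_t$ and a part multiplying $d_t$. This recovers $z_{t+1}=A_{UIE}z_t + B_{UIE}d_t$ with precisely the $A_{UIE}, B_{UIE}$ in \eqref{eqn:cl_UIE_A}--\eqref{eqn:cl_UIE_B}.

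The second step is to show that the true input window obeys the same recursion. By the argument already used inside the proof of Lemma~\ref{lem:op_UIE_stab}, under Assumption~\ref{assum:PE} and \eqref{eqn:null} we have $u_{t-N_{init}+2:t+1} = \hat{A}_{UIE}u_{t-N_{init}+1:t} + \hat{B}_{UIE}d_t$ for any $\hat{A}_{UIE},\hat{B}_{UIE}\in\mathcal{U}_{op}$. Using \eqref{eqn:y_onestep}, the innovation $y_{t+1}-H_y\hat{G}P(G)[u_{t-N_{init}+1:t}^\top,d_t^\top]^\top$ is identically zero when the true past input is plugged in, so multiplying by $L$ and adding it to the above identity does not change it. The identical algebraic manipulation then produces $u_{t-N_{init}+2:t+1} = A_{UIE}u_{t-N_{init}+1:t} + B_{UIE}d_t$. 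Subtracting from the $z$-recursion gives $e_{t+1}=A_{UIE}e_t$, and the Schur hypothesis closes the argument.

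The main obstacle I expect is justifying that the predicted output $H_y\hat{G}P(G)[u_{t-N_{init}+1:t}^\top,d_t^\top]^\top$ recovers the true $y_{t+1}$ independently of the chosen generalized inverse $\hat{G}\in\hat{H}^g$. This is the analogue for $(\hat{H},H_y)$ of the uniqueness statement Lemma~\ref{lem:ut_uni} gave for $(H,H_u)$: provided $N_{init}\geq\mathbf{l}(\mathcal{B}(A,B,C,D))$, the window $[u_{t-N_{init}+1:t};u_{t+1};y_{t-N_{init}+1:t}]$ lies in $\text{colspan}(\hat{H})$ and pins down the internal state, so $y_{t+1}$ is a well-defined function of $\hat{b}$ and the choice of $\hat{G}$ drops out. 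A short remark invoking this extension of Lemma~\ref{lem:ut_uni} (and the persistency order in Assumption~\ref{assum:PE}, which accommodates the extra row block in $\hat{H}$) is what makes the ``plug in the truth and the innovation is zero'' step fully rigorous; everything else is linear bookkeeping.
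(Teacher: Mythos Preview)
Your proposal is correct and follows essentially the same route as the paper: both arguments show that the true input window satisfies the identical recursion as $z_t$ (via the op-UIE identity from Lemma~\ref{lem:op_UIE_stab} together with the fact~\eqref{eqn:y_onestep} that the innovation vanishes on the truth), subtract to obtain $e_{t+1}=A_{UIE}e_t$, and conclude by Schur stability. The ``main obstacle'' you flag is exactly what the paper records as~\eqref{eqn:y_onestep} just before the theorem, so your anticipated extension of Lemma~\ref{lem:ut_uni} to the pair $(\hat{H},H_y)$ is already in place.
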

\begin{proof}
    Similar to the proof of Lemma~\eqref{lem:op_UIE_stab}, the actual unknown input sequence satisfies the dynamics 
\begingroup\makeatletter\def\f@size{9}\check@mathfonts
\begin{equation*}
    \begin{split}
    u_{t-N_{init}+2:t+1}
    &= \hat{A}_{UIE}u_{t-N_{init}+1:t}+\hat{B}_{UIE}d_t\;,\\
    u_t &= [\mathbf{0}\;\mathbf{I}_{n_u}]u_{t-N_{init}+1:t}\;,
    \end{split}
\end{equation*}
\endgroup
Therefore, we have
\begingroup\makeatletter\def\f@size{9}\check@mathfonts
 \begin{align*}
     &\lim\limits_{t\rightarrow \infty}\hat{u}_t - u_t \\
     \stackrel{(a)}{=} &\lim\limits_{t\rightarrow \infty}[\textbf{0}\;\mathbf{I}_{n_u}]
    A_{UIE}
    (z_{t-1}-u_{t-N_{init}:t-1})\\
     =& \lim\limits_{t\rightarrow \infty}[\textbf{0} \;\mathbf{I}_{n_u}]
    A_{UIE}
    ^t(z_0-u_{-N_{init}+1:0})\;,
 \end{align*}
 \endgroup
 where $(a)$ follows the equations~\eqref{eqn:y_onestep} and~\eqref{eqn:cl_UIE}. The above equation converges to $0$ if and only if $A_{UIE}$ is Schur stable, and we conclude the proof.
\end{proof}

\begin{remark}
The design methods by Lyapunov condition~\eqref{eqn:Lya_bil}, LMI formulation~\eqref{eqn:LMI} and cl-UIE in Theorem~\eqref{eqn:cl_UIE} do not guarantee the existence of a data-driven UIE for any system. The existence problem of a UIE has been explored in a model-based setup, which shows that the existence is related to the system dynamic $\mathcal{B}(A,B,C,D)$~\cite{hou1998input,valcher1999state}. However, the existence problem within a data-driven setup is still unclear and remains for future work.
\end{remark}

\begin{remark}
In comparison with the data-driven op-UIE, we observed that the data-driven cl-UIE is more resistant to the measurement noise within the data, because it does not require any construction of the null space, which may be sensitive to the measurement noise~\cite{li1998relative}.
\end{remark}

In conclusion,  the design process and operation of op-UIE and cl-UIE are summarized in Algorithm~\ref{alg:UIE}.
\begin{algorithm} 
  \caption{Design a data-driven UIE}
  \label{alg:UIE} 
{
Given historical signals $\{u_{\textbf{d},i},y_{\textbf{d},i}\}_{i=1}^T$.
\begin{enumerate}
    \item Choose a large $N_{init}$, try $N_{est} = 1, 2,...,max_{N_{est}}$ until the condition~\eqref{eqn:null} holds.
    \item Build the UIE in the form~\eqref{eqn:UIE} by either:
    \begin{enumerate}
        \item op-UIE: Compute $G$ by either~\eqref{eqn:Lya_bil} or~\eqref{eqn:LMI}. Compute the components in~\eqref{eqn:null_AB}.
        \item cl-UIE: Choose any $G\in H^g,\;\hat{G}\in\hat{H}^g$. Design $L$ such that the stability condition in Theorem~\ref{thm:EFUIE} holds. Compute the components in~\eqref{eqn:cl_UIE}.
    \end{enumerate}
    \item From $t=0$, choose arbitrary $z_0$ and repeatedly compute~\eqref{eqn:UIE} to output $\hat{u}_t$.
\end{enumerate}
}
\end{algorithm}

\section{Simulation and experimental validation}\label{sect:5}
\subsection{Simulation}
We consider the following \textbf{unstable} LTI dynamics:
{
\setlength\extrarowheight{1pt} 
\begin{align*}
\left[
\begin{array}{@{} c|c @{}}
A & B\\
\hline
C & D\\
\end{array}
\right] = 
\left[
\begin{array}{@{} c|c @{}}
    \left[\begin{array}{@{} rrr @{}}
    0.9 & 1.4 & 0.2 \\
    0.5 & 1.5 & 1.5\\
    1.6 & 0.6 & 0.4
    \end{array}\right]   & 
    \left[\begin{array}{@{} rr @{}}
    0.5 & 1.0 \\
    0.9 & 0.3\\
    0.4 & 0.3   
    \end{array}\right] \\
\hline
    \left[\begin{array}{@{} rrr @{}}
    1.5 & 1.0& 1.4\\
    0.6 & 0.3 & 0.3
    \end{array}\right]  & 
    \gamma \left[\begin{array}{@{} rr @{}}
    2.0 & 0.8 \\
    1.4 & 1.4
    \end{array}\right]     \\
\end{array}
\right]
\end{align*}
}

Recall remark~\ref{rmk:Nest}, different $\gamma$ results $N_{est}$, and we consider $\gamma=1$ with $N_{est}=1$ and $\gamma=0$ with $N_{est}=2$. We set $N_{init} = 5$, and the historical I/O data are generated by a $50$-step trajectory excited by random inputs. Setting the initial guess to $z_0 = [0 \;0 \; \dots 0 \; 0]^\top$, the results of $\hat{u}_t(i)$ by op-UIE  are plotted in Figure~~\ref{fig:sim_all}$(a)$ and the estimation error $du_t(i) = u_t(i) - \hat{u}_t(i)$ is given in Figure~\ref{fig:sim_all}$(b)$ and $(c)$, where both proposed design schemes show fast convergence in the estimate even though the underlying dynamics is unstable.


\definecolor{myblue}{rgb}{0.20, 0.6, 0.78}
\definecolor{mygreen}{rgb}{0.2,0.8,0.2}
\definecolor{myred}{rgb}{0.5,0,0}
\definecolor{myorange}{rgb}{1,0.6,0.07}
\definecolor{mygrey}{rgb}{0.5,0.5,0.5}

 \begin{figure*}[htb!]
    \centering
    \begin{tikzpicture}
    \begin{groupplot}[
        legend columns=3,
        legend style={
    	font=\tiny},
        group style=
            {columns=3, horizontal sep=1.3cm}
        ]
    \nextgroupplot[
    title style={yshift=-3.2cm},
    title=(a),
    xmin= -5, xmax=50,
    ymin= -7, ymax= 6,
    enlargelimits=false,
    clip=true,
    grid=major,
    mark size=0.5pt,
    width=.33\linewidth,
    height=0.2\linewidth,
    ylabel = {Input $u_t(1)$},
    xlabel={Time step $t$},
    ylabel style={at={(axis description cs:-0.05,0.5)}},
    xlabel style={at={(axis description cs:0.5,-0.1)}},
    legend columns=2,
    label style={font=\tiny}, 
    ticklabel style = {font=\tiny},
    legend style={
    	font=\tiny, 
    	draw=none,
		at={(0.5,1.03)},
        anchor=south
    }    
    ]
    
    \pgfplotstableread[col sep =comma]{image/sim_D1_u1.dat}{\dat}
    
    \addplot+ [thick,mark=*,mygreen,mark options={solid, mygreen, scale=2.5}] table [x={t}, y={u1_r11}] {\dat}; 
    \addlegendentry{Actual input $u_t(1)$}
    
    \addplot+ [line width=0.5pt,mark=diamond,mark options={solid, scale=1.75},
    myred, dashed] table [x={t}, y={u1_p11}] {\dat}; 
    \addlegendentry{Estimated input $\hat{u}_t(1)$}
    
    
    \nextgroupplot[
    title style={yshift=-3.2cm},
    title=(b),
    xmin= 0, xmax=50,
    ymin= -4, ymax= 2,
    enlargelimits=false,
    clip=true,
    grid=major,
    mark size=0.5pt,
    width=.33\linewidth,
    height=0.2\linewidth,
    ylabel = {Input $u_t(1)$},
    xlabel={Time step $t$},
    ylabel style={at={(axis description cs:-0.05,0.5)}},
    xlabel style={at={(axis description cs:0.5,-0.1)}},
    legend columns=2,
    label style={font=\tiny}, 
    ticklabel style = {font=\tiny},
    legend style={
    	font=\tiny, 
    	draw=none,
		at={(0.5,1.03)},
        anchor=south
    }    
    ]
    
    \pgfplotstableread[col sep =comma]{image/sim_D1.dat}{\dat}
    
    \addplot+ [line width=1pt,mark=square,myblue,mark options={solid, myblue, scale=1.0}] table [x={t}, y={du1_LMI1}] {\dat}; 
    \addlegendentry{ $du_t(1)$ by op-UIE}

    \addplot+ [line width=0.5pt,myred,mark=o,mark options={solid,myred, scale=0.75}] table [x={t}, y={du1_Luen1}] {\dat}; 
    \addlegendentry{$du_t(1)$ by cl-UIE}
    
    \addlegendimage{thick,mark=square,myblue,mark options={solid, myblue, scale=1.0},dashed}
    \addlegendentry{ $du_t(2)$ by op-UIE}
    
    \addlegendimage{line width=0.5pt,myred,mark=o,mark options={solid,myred, scale=0.75},
    myred, dashed}
    \addlegendentry{$du_t(2)$ by cl-UIE}
    
    \nextgroupplot[
    title style={yshift=-3.2cm},
    title=(c),
    xmin= 0, xmax=50,
    ymin= -1.0, ymax= 0.5,
    enlargelimits=false,
    clip=true,
    grid=major,
    mark size=0.5pt,
    width=.33\linewidth,
    height=0.2\linewidth,
    ylabel = {Input $u_t(1)$},
    xlabel={Time step $t$},
    ylabel style={at={(axis description cs:-0.1,0.5)}},
    xlabel style={at={(axis description cs:0.5,-0.1)}},
    legend columns=2,
    label style={font=\tiny}, 
    ticklabel style = {font=\tiny},
    legend style={
    	font=\tiny, 
    	draw=none,
		at={(0.5,1.03)},
        anchor=south
    }    
    ]
    
    \pgfplotstableread[col sep =comma]{image/sim_D0.dat}{\dat}
    
    \addplot+ [line width=1pt,mark=square,myblue,mark options={solid, myblue, scale=1.0}] table [x={t}, y={du2_LMI1}] {\dat}; 
    \addlegendentry{ $du_t(1)$ by op-UIE}

    \addplot+ [line width=0.5pt,myred,mark=o,mark options={solid,myred, scale=0.75}] table [x={t}, y={du2_Luen1}] {\dat}; 
    \addlegendentry{$du_t(1)$ by cl-UIE}
    
    \addlegendimage{thick,mark=square,myblue,mark options={solid, myblue, scale=1.0},dashed}
    \addlegendentry{ $du_t(1)$ by op-UIE}
    
    \addlegendimage{line width=0.5pt,myred,mark=o,mark options={solid,myred, scale=0.75},
    myred, dashed}
    \addlegendentry{$du_t(2)$ by cl-UIE}

    \end{groupplot}
    
    \begin{groupplot}[
        legend columns=3,
        legend style={
    	font=\tiny},
        group style=
            {columns=3, horizontal sep=1.3cm}]
    \nextgroupplot[        axis y line*=right,
    xmin= -5, xmax=50,
    ymin= -7, ymax= 6,
    enlargelimits=false,
    clip=true,
    mark size=0.5pt,
    width=.33\linewidth,
    height=0.2\linewidth,
    axis x line=none,
    axis y line=none,
    legend columns=2,
    label style={font=\tiny}, 
    ticklabel style = {font=\tiny},
    legend style={
    	font=\tiny,
    	draw=none,
		at={(0.5,1.03)},
        anchor=south
    }  ]

   \nextgroupplot[
    axis y line*=right,
    ymin= -2, ymax= 4,
    xmin= 0, xmax=50,
    enlargelimits=false,
    ylabel = {Input $u_t(2)$},
    ylabel style={at={(axis description cs:1.05,0.5)}},    
    clip=true,
    mark size=0.5pt,
    width=.33\linewidth,
    height=0.2\linewidth,
    axis x line=none,
    legend columns=2,
    label style={font=\tiny}, 
    ticklabel style = {font=\tiny},
    legend style={
    	font=\tiny,
    	draw=none,
		at={(0.5,1.03)},
        anchor=south
    }    
    ]
    
    \pgfplotstableread[col sep =comma]{image/sim_D1.dat}{\dat}
    
    \addplot+ [thick,mark=square,myblue,mark options={solid, myblue, scale=1.0},dashed] table [x={t}, y={du1_LMI2}] {\dat}; 
    
    \addplot+ [line width=0.5pt,myred,mark=o,mark options={solid,myred, scale=0.75},
    myred, dashed] table [x={t}, y={du1_Luen2}] {\dat}; 

   \nextgroupplot[
    axis y line*=right,
    xmin= 0, xmax=50,
    ymin= -2, ymax= 4,
    enlargelimits=false,
    ylabel = {Input $u_t(2)$},
    ylabel style={at={(axis description cs:1.05,0.5)}},    
    clip=true,
    mark size=0.5pt,
    width=.33\linewidth,
    height=0.2\linewidth,
    axis x line=none,
    legend columns=2,
    label style={font=\tiny}, 
    ticklabel style = {font=\tiny},
    legend style={
    	font=\tiny,
    	draw=none,
		at={(0.5,1.03)},
        anchor=south
    }    
    ]
    
    \pgfplotstableread[col sep =comma]{image/sim_D0.dat}{\dat}
    
    \addplot+ [thick,mark=square,myblue,mark options={solid, myblue, scale=1.0},dashed] table [x={t}, y={du2_LMI2}] {\dat}; 
    
    \addplot+ [line width=0.5pt,myred,mark=o,mark options={solid,myred, scale=0.75},
    myred, dashed] table [x={t}, y={du2_Luen2}] {\dat}; 

    \end{groupplot}
    \end{tikzpicture}
    \caption{Simulation: input estimation by op-UIE and cl-UIE. (a): op-UIE, $\gamma=1$, input $u_t(1)$. (b): two UIEs: $\gamma=1$. (c): two UIEs: $\gamma=0$.} 
    \label{fig:sim_all}  
 \end{figure*}
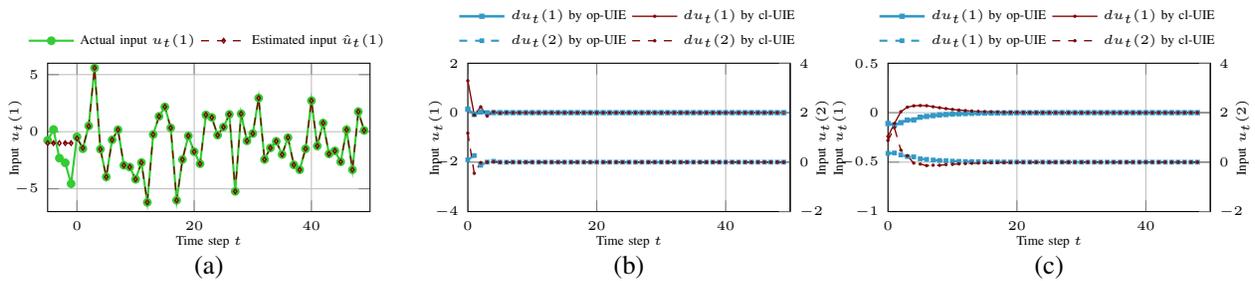

\subsection{Experiment}
This experiment is carried out on a whole building, named Polydome, on the EPFL campus, and we estimate the number of occupants by the indoor CO2 level measurement. Although the building dynamics is nonlinear due to the ventilation system, it has a good linear approximation when the ventilation flow rate is constant~\cite{cali2015co2}. Under the assumption that the CO2 generation rate per person doing office work is relatively constant, the proposed schemes in this work are feasible. The offline dataset contains indoor temperature, weather condition, heat pump power, CO2 level, and occupant number recorded by manual headcount (i.e., online measurement is not affordable). The indoor CO2 level is measured as the averaged value from four air quality sensors, whose installation locations are shown in Figure~\ref{fig:sensor}. Data from five weekdays are used to build the Hankel matrix, and the proposed data-driven cl-UIE\footnote{The op-UIE does not give good performance in this experiment due to the measurement noise within the data and the nonlinearity of the underlying dynamics.} is compared with linear regression (LR) and Gaussian process regression (GR). Note that the building is empty outside the office hour, i.e., between $7:00$ PM and $7:00$ AM, we enforce $\hat{u}_t = 0$ within this time interval to improve the estimate. The results are plotted in Figure~~\ref{fig:exp}, from which one can see that the proposed cl-UIE scheme is better than LR and slightly worse than GR in terms of mean absolute error (MAE). However, the UIE better tracks the occupancy trajectory while GR shows significant fluctuations in its estimates. 



\begin{figure}[htbp]
    \centering
    \includegraphics[width=0.2\textwidth]{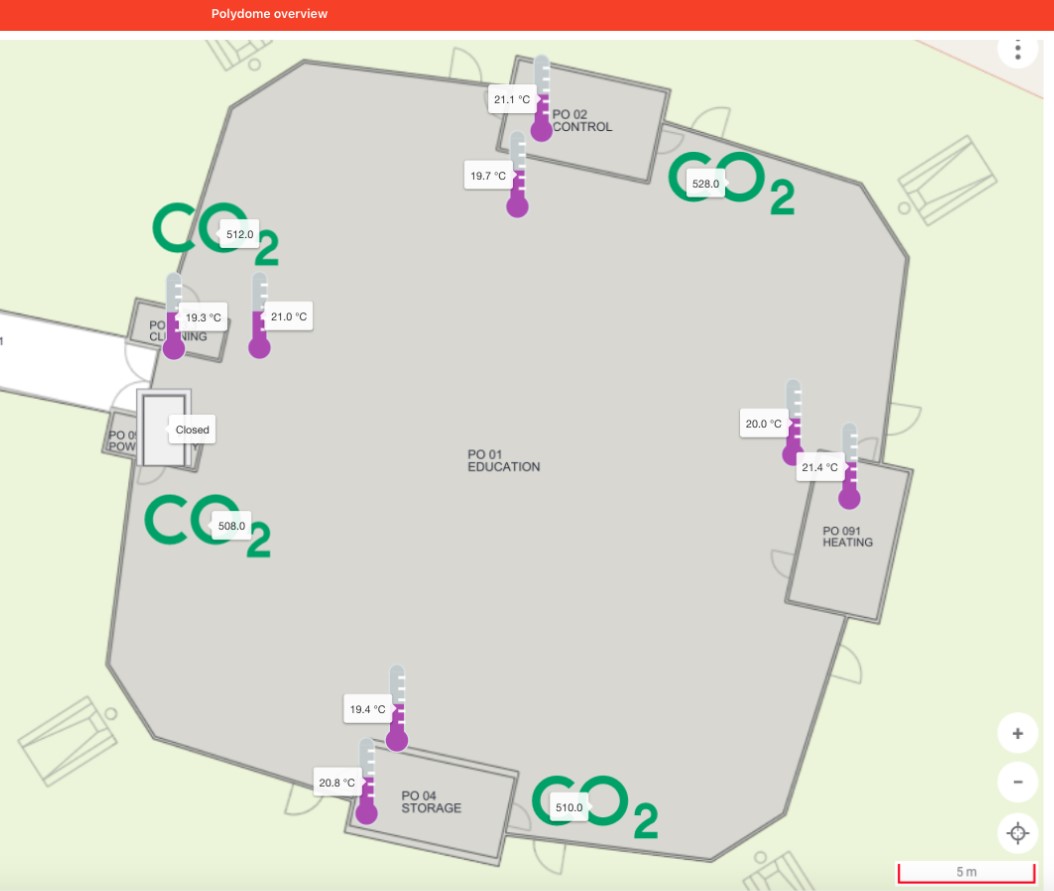}
    \caption{{\scriptsize Position of Air quality sensors in the Polydome}}
    \label{fig:sensor}
\end{figure}



\definecolor{myblue}{rgb}{0.20, 0.6, 0.78}
\definecolor{mygreen}{rgb}{0.2,0.8,0.2}
\definecolor{myred}{rgb}{0.5,0,0}
\definecolor{myorange}{rgb}{1,0.6,0.07}
\definecolor{mygrey}{rgb}{0.5,0.5,0.5}

 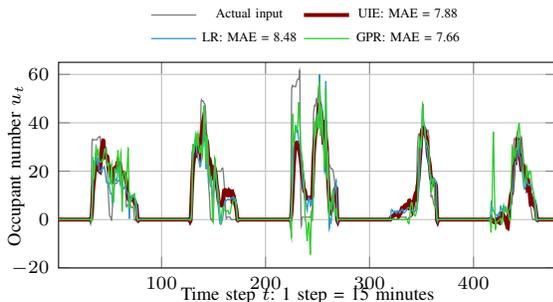
\begin{figure}[htbp]
    \centering
    \begin{tikzpicture}
    \begin{axis}[
    xmin= 1, xmax=480,
    ymin= -20, ymax= 65,
    enlargelimits=false,
    clip=true,
    grid=major,
    mark size=0.5pt,
    width=0.95\linewidth,
    height=0.5\linewidth,
    ylabel = {Occupant number $u_t$},
    xlabel={Time step $t$: 1 step = 15 minutes},
    ylabel style={at={(axis description cs:-0.05,0.5)}},
    xlabel style={at={(axis description cs:0.5,-0.05)}},
    legend columns=2,
    label style={font=\scriptsize}, 
    ticklabel style = {font=\scriptsize},
    legend style={
    	font=\tiny,
    	draw=none,
		at={(0.5,1.03)},
        anchor=south
    }    
    ]
    
    \pgfplotstableread[col sep =comma]{image/exp.dat}{\dat}
    \addplot+ [line width=0.5pt, mark=, mygrey,smooth] table [x={t}, y={u}] {\dat}; 
    \addlegendentry{Actual input}
    \addplot+ [line width=1.5pt, mark=, myred,smooth] table [x={t}, y={u_Luen}] {\dat};
    \addlegendentry{UIE: MAE = 7.88}
    \addplot+ [line width=0.5pt, mark=, myblue,smooth] table [x={t}, y={u_LR}] {\dat};
    \addlegendentry{LR: MAE = 8.48}    
    
    \addplot+ [line width=0.5pt, mark=, mygreen,smooth] table [x={t}, y={u_GR}] {\dat};
    \addlegendentry{GPR: MAE = 7.66}
    

    \end{axis}
    \end{tikzpicture}
    \caption{Comparison of occupant number estimation by the data-driven UIE, LR and GPR. Mean absolute error (MAE) is computed for the data during the work time.}
    \label{fig:exp}  
 \end{figure}



\section{Conclusions}\label{sect:6}
This work proposes two data-driven UIE design schemes based on the Lyapunov condition and the Luenberger-observer-type feedback. The stability of the proposed schemes is discussed, and their efficacy is validated by numerical simulations and a real-world experiment of occupancy estimation.





\bibliographystyle{./bib/IEEEtran}
\bibliography{ref.bib}

\end{document}